\tikzstyle{ccyan}=[circle, draw, thick,fill=cyan!30, minimum size=12pt,inner sep=0pt]
\tikzstyle{cgrey}=[circle, draw, thick,fill=gray!30, minimum size=10pt,inner sep=0pt]
\tikzstyle{cgreys}=[circle, draw, thick,fill=gray!30, minimum size=12pt,inner sep=0pt]
\newcommand{\ket}[1]{| #1 \rangle}
\newcommand{\bra}[1]{\langle #1 |}
\newcommand{\braket}[2]{\langle #1 | #2 \rangle}
\def\U{\uparrow}
\def\D{\downarrow}
\def\L{\leftarrow}
\def\R{\rightarrow}
\def\S{\circlearrowleft}
\newcommand{\comment}[1]{}
\newtheorem{lemma}{Lemma}
\title{Lackadaisical quantum walks with multiple marked vertices}
\author{Nikolajs Nahimovs}
\date{\small{Center for Quantum Computer Science, Faculty of Computing, University of Latvia} \\ 
\small{Raina bulv. 19, Riga, LV-1586, Latvia}\\
\small{\texttt{nikolajs.nahimovs@lu.lv}}}
\begin{document}

\maketitle


\begin{abstract}

\noindent
The concept of lackadaisical quantum walk -- quantum walk with self loops -- was first introduced for discrete-time quantum walk on one-dimensional line \cite{Norio:2005}. Later it was successfully applied to improve the running time of the spacial search on two-dimensional grid~\cite{Wong:2018}.

In this paper we study search by lackadaisical quantum walk on the two-dimensional grid with multiple marked vertices. First, we show that the lackadaisical quantum walk, similarly to the regular (non-lackadaisical) quantum walk, has exceptional configuration, i.e. placements of marked vertices for which the walk has no speed-up over the classical exhaustive search.
Next, we demonstrate that the weight of the self-loop suggested in \cite{Wong:2018} is not optimal for multiple marked vertices. And, last, we show how to adjust the weight of the self-loop to overcome the aforementioned problem.

\end{abstract}

 
\section{Introduction}

Quantum walks are quantum counterparts of classical random walks \cite{Portugal:2013}. 
Similarly to classical random walks, there are two types of quantum walks: discrete-time quantum walks (DTQW),  introduced by Aharonov~{\it et al.}~\cite{Aharonov:1993}, and continuous-time quantum walks (CTQW), introduced by Farhi~{\it et al.}~\cite{Farhi:1998}.
For the discrete-time version, the step of the quantum walk is usually given by two operators -- coin and shift -- which are applied repeatedly. 
The coin operator acts on the internal state of the walker and rearranges the amplitudes of going to adjacent vertices. The shift operator moves the walker between the adjacent vertices.

Quantum walks have been useful for designing algorithms for a variety of search problems\cite{Nagaj:2011}.
To solve a search problem using quantum walks, we introduce the notion of marked elements (vertices), corresponding to elements of the search space that we want to find.
We perform a quantum walk on the search space with one transition rule at the unmarked vertices, and another transition rule at the marked vertices. If this process is set up properly, it leads to a quantum state in which the marked vertices have higher probability than the unmarked ones. This method of search using quantum walks was first introduced in \cite{Shenvi:2003} and has been used many times since then.

Most of the papers studying quantum walks consider a search space containing a single marked element only. 
However, in contrary of classical random walks, the behavior of the quantum walk can drastically change if the search space contains more that one marked element. 
Ambainis and Rivosh~\cite{Ambainis:2008} have studied DTQW on two-dimensional grid and showed that if the diagonal of the grid is fully marked then the probability of finding a marked element does not grow over time.
Wong~\cite{Wong:2016} analyzed the spatial search problem by CTQW on the simplex of complete graphs and showed that the placement of marked vertices can dramatically influence the required jumping rate of the quantum walk. 
Wong and Ambainis~\cite{Wong:2015} analysed DTQW on the simplex of complete graphs and showed that if one of the complete graphs is fully marked then there is no speed-up over classical exhaustive search.
Nahimovs and Rivosh~\cite{Nahimovs:2015a,Nahimovs:2015} studied DTQW on two-dimensional grid for various placements of multiple marked vertices and proved several gaps in the running time of the walk (depending on the placement of marked vertices). Additionally the authors have demonstrated placements of a constant number of marked vertices for which the walk have no speed-up over classical exhaustive search. They named such placements \textit{exceptional configurations}.
Nahimovs and Santos~\cite{Nahimovs:2017} have extended their work to general graphs.

The concept of lackadaisical quantum walk (quantum walk with self loops) was first studied for DTQW on one-dimensional line \cite{Norio:2005,Stefanak:2014}.
Later on, Wong showed an example of how to apply the self-loops to improve the DTQW based search on the complete graph~\cite{Wong:2015a} and two-dimensional grid~\cite{Wong:2018}.
The running time of the lackadaisical walk heavily depends on a weight of the self-loop. Saha~{\it et al.}\cite{Saha:2018} showed that the weight $l = \frac{4}{N}$ suggested by Wong for two-dimensional grid with a single marked vertex is not optimal for multiple marked vertices. They have demonstrated that for a block of $\sqrt{m} \times \sqrt{m}$ marked vertices one should use the weight $l = \frac{4}{N(m + \sqrt{m}/2)}$.

In this paper, we study search by discrete-time lackadaisical quantum walk on two-dimensional grid with multiple marked vertices.
First, we show that the lackadaisical quantum walk, similarly to the regular (non-lackadaisical) quantum walk, has exceptional configuration, i.e. placements of marked vertices for which the walk have no speed-up over the classical exhaustive search.
Next, we study an arbitrary placement of $m$ marked vertices and demonstrate that the weight $l$ suggested by Wong is not optimal for multiple marked vertices. The same holds for the weight suggested by Saha~{\it et al.}, which seems to work only for a block of $\sqrt{m} \times \sqrt{m}$ marked vertices.
Last, we analyze how to adjust the weight to overcome the aforementioned problem.
We propose two better constructions -- $l = \frac{4m}{N}$ and $l = \frac{4(m - \sqrt{m})}{N}$ -- and discuss their boundaries of application.


\section{Quantum walk on the two-dimensional grid}

\subsection{Regular (non-lackadaisical) quantum walk}

Consider a two-dimensional grid of size $\sqrt{N}\times\sqrt{N}$ with periodic (torus-like) boundary conditions.  
The locations of the grid are labeled by the coordinates $(x,y)$ for $x, y \in \{0,\dots,\sqrt{N}-1\}$.
The coordinates define a set of state vectors, $\ket{x,y}$, which span the Hilbert space ${\cal{H_P}}$ associated with the position. 
Additionally, we define a 4-dimensional Hilbert space ${\cal{H_C}}$, spanned by the set of states $\{\ket{c}: c\in \{\U,\D,\L,\R \}\}$, associated with the direction. We refer to it as the coin subspace. The  Hilbert space of the quantum walk is ${\cal{H_P}}\otimes{\cal{H_C}}$.

The evolution of a state of the walk (without searching) is driven by the unitary operator $U = S\cdot (I\otimes C)$, where $S$ is the flip-flop shift operator
\begin{eqnarray}
S\ket{i,j,\U} & = & \ket{i,j+1,\D} \\
S\ket{i,j,\D} & = & \ket{i,j-1,\U} \\
S\ket{i,j,\L} & = & \ket{i-1,j,\R} \\
S\ket{i,j,\R} & = & \ket{i+1,j,\L},
\end{eqnarray}
and $C$ is the coin operator, given by the Grover's diffusion transformation 
\begin{equation}
C = 2 \ket{s_c}\bra{s_c} - I_4
\end{equation}
with 
$$
\ket{s_c} = \frac{1}{\sqrt{4}}(\ket{\U} + \ket{\D} + \ket{\L} + \ket{\R}) .
$$
The system starts in 
\begin{equation}\label{eq:psi0_grid}
\ket{\psi(0)} = \frac{1}{\sqrt{N}} \sum_{i,j=0}^{\sqrt{N}-1} \ket{i,j} \otimes \ket{s_c} ,
\end{equation}
which is uniform distribution over vertices and directions. Note, that this is a unique eigenvector of $U$ with eigenvalue $1$.

To use quantum walk for search, we extend the step of the algorithm with a query to an oracle, making the step
$$
U' = U \cdot (Q \otimes I_4) .
$$
Here $Q$ is the query transformation which flips the sign at a marked vertex, irrespective of the coin state. 
Note that $\ket{\psi_0}$ is a 1-eigenvector of $U$ but not of $U'$.
If there are marked vertices, the state of the algorithm starts to deviate from $\ket{\psi(0)}$.
In case of a single marked vertex, after $O(\sqrt{N\log{N}})$ steps the inner product $\braket{\psi(t)}{\psi(0)}$ becomes close to $0$.
If one measures the state at this moment, he will find the marked vertex with $O(1 / \log{N})$ probability~\cite{Ambainis:2005}.
With amplitude amplification this gives the total running time of $O(\sqrt{N} \log{N})$ steps.


\subsection{Lackadaisical quantum walk}

In case of lackadaisical quantum walk the coin subspace of the walk is 5-dimensional Hilbert space spanned by the set of states $\{\ket{c}: c\in \{\U,\D,\L,\R,\S \}\}$. The  Hilbert space of the quantum walk is $\mathbb{C}^N\otimes \mathbb{C}^5$.

The shift operator acts on a self loop as 
\begin{equation}
S\ket{i,j,\S} = \ket{i,j,\S} .
\end{equation}
The coin operator is 
\begin{equation}
C = 2 \ket{s_c}\bra{s_c} - I_5
\end{equation}
with 
$$
\ket{s_c} = \frac{1}{\sqrt{4 + l}}(\ket{\U} + \ket{\D} + \ket{\L} + \ket{\R} + \sqrt{l}\ket{\S}) .
$$
The system starts in 
\begin{equation}\label{eq:psi0_grid}
\ket{\psi(0)} = \frac{1}{\sqrt{N}} \sum_{i,j=0}^{\sqrt{N}-1} \ket{i,j} \otimes \ket{s_c} ,
\end{equation}
which is uniform distribution over vertices, but not directions. As before $\ket{\psi(0)}$ is a unique 1-eigenvector of $U$.

The step of the search algorithm is $U' = U \cdot (Q \otimes I_5)$.
As it is shown in \cite{Wong:2018}, in case of a single marked vertex, for the weight $l = \frac{4}{N}$, after $O(\sqrt{N\log{N}})$ steps the inner product $\braket{\psi(t)}{\psi(0)}$ becomes close to $0$.
If one measures the state at this moment, he will find the marked vertex with $O(1)$ probability, which gives $O(\log{N})$ improvement over the loopless algorithm.


\section{Stationary states of the lackadaisical quantum walk}

In this section we will show that the lackadaisical quantum walk, similarly to the regular (non-lackadaisical) quantum walk, has exceptional configurations, i.e. placements of marked vertices for which the walk have no speed-up over the classical exhaustive search.

Consider a state $\ket{s_{\U}^a} = a(- (3+l)\ket{\U} + \ket{\D} + \ket{\L} + \ket{\R} + \sqrt{l}\ket{\S})^T$.
Similarly one can define states $\ket{s_{\D}^a}$, $\ket{s_{\L}^a}$ and $\ket{s_{\R}^a}$. The defined states are orthogonal to $\ket{s_c}$.
Consider an effect of the coin transformation on $\ket{s_{\U}^a}$:
$$
C \ket{s_{\U}^a} = \left( 2 \ket{s_c}\bra{s_c} - I_5 \right) \ket{s_{\U}^a} = - \ket{s_{\U}^a} .
$$
As one can see, the coin transformation inverts a sign of the state.

Now, consider a two-dimensional grid with two marked vertices $(i,j)$ and $(i+1,j)$.
Let $\ket{\phi_{stat}^a}$ be a state where the coin part of all unmarked vertices is $\ket{s_c^a} = a(\ket{\U} + \ket{\D} + \ket{\L} + \ket{\R} + \sqrt{l}\ket{\S})^T$, the coin part of $(i,j)$ is $\ket{s_{\R}^{a}}$ and the coin part of $(i+1,j)$ is $\ket{s_{\L}^{a}}$ (see Fig.~\ref{fig:Two_marked_vertices}), that is,

\begin{figure}[!htb]
\centering
\includegraphics[scale=0.25]{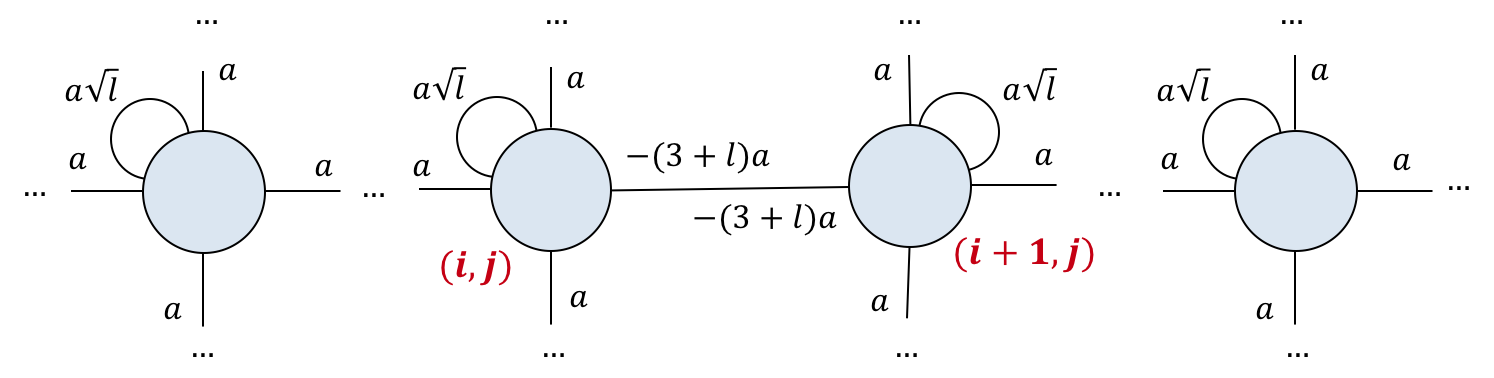}
\caption{Stationary state of two marked vertices $(i,j)$ and $(i+1,j)$.}
\label{fig:Two_marked_vertices}
\end{figure}

\begin{equation}\label{eq:phi_grid}
\ket{\phi_{stat}^a} = \sum_{i,j=0}^{\sqrt{N}-1} \ket{i,j} \ket{s_c^a} - (4 + l)a \left( \ket{i,j,\R} + \ket{i+1,j,\L} \right).
\end{equation}
We claim that this state is not changed by a step of the algorithm.

\begin{lemma}\label{lemma:grid}
Consider a grid of size $\sqrt{N} \times \sqrt{N}$ with two adjacent marked vertices $(i,j)$ and $(i+1,j)$. Then the state $\ket{\phi_{stat}^a}$, given by Eq.~(\ref{eq:phi_grid}), is not changed by the step of the algorithm, that is, $U'\ket{\phi_{stat}^a} = \ket{\phi_{stat}^a}$.
\end{lemma}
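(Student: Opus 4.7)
The plan is to decompose the search step $U' = S \cdot (I \otimes C) \cdot (Q \otimes I_5)$ and verify that $(I\otimes C)(Q\otimes I_5)$ restores $\ket{\phi_{stat}^a}$ (two sign flips cancel), and that $S$ fixes the result. First I would record the eigenstructure of $C$ implicit in the setup: the symmetric state $\ket{s_c^a}$ carried at unmarked vertices equals $a\sqrt{4+l}\,\ket{s_c}$, so $C\ket{s_c^a}=\ket{s_c^a}$; while $\ket{s_{\U}^a}$, and by symmetry $\ket{s_{\D}^a},\ket{s_{\L}^a},\ket{s_{\R}^a}$, are orthogonal to $\ket{s_c}$ — the inner product evaluates to $-(3+l)+1+1+1+\sqrt{l}\cdot\sqrt{l}=0$ — and are therefore $-1$-eigenvectors of $C$, as the author notes just before the lemma. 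A direct algebraic check also shows that the coin state at each marked vertex in Eq.~(\ref{eq:phi_grid}) is precisely $\ket{s_{\R}^a}$ at $(i,j)$ and $\ket{s_{\L}^a}$ at $(i+1,j)$, since subtracting $(4+l)a\ket{\R}$ from $\ket{s_c^a}$ produces exactly the coefficient $-(3+l)a$ in the $\R$ register.

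With these eigen-relations in hand, the action of $(I\otimes C)(Q\otimes I_5)$ on $\ket{\phi_{stat}^a}$ is immediate: $Q$ negates the sign only at the two marked vertices, turning $\ket{s_{\R}^a}$ into $-\ket{s_{\R}^a}$ and $\ket{s_{\L}^a}$ into $-\ket{s_{\L}^a}$; then $C$ multiplies each of these by another $-1$ while leaving every unmarked coin register untouched, recovering $\ket{\phi_{stat}^a}$.

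The remaining task is to show $S\ket{\phi_{stat}^a}=\ket{\phi_{stat}^a}$, which I would do register by register. The $\U$, $\D$ and $\S$ amplitudes are uniform across the grid (values $a$, $a$, and $\sqrt{l}\,a$), since Eq.~(\ref{eq:phi_grid}) only perturbs horizontal registers; the flip-flop shift exchanges $\U\leftrightarrow\D$ between vertically adjacent cells and fixes the self-loop pointwise, so these three components are preserved. The $\L$-amplitude equals $a$ everywhere except at $(i+1,j)$, where it equals $-(3+l)a$; under $S$ this exceptional value is transported to the $\R$-register at $(i,j)$, which is exactly where the original state already carries $-(3+l)a$. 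The analogous argument for $\R$ closes the shift-invariance, and combining with the previous step yields $U'\ket{\phi_{stat}^a}=\ket{\phi_{stat}^a}$. No step is genuinely hard; the only point requiring care is the pairing of the two exceptional horizontal amplitudes under the flip-flop shift — it works precisely because the two marked vertices are horizontally adjacent, which is essential to the construction.
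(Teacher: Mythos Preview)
Your proof is correct and follows essentially the same approach as the paper's: first show that $(I\otimes C)(Q\otimes I_5)$ fixes $\ket{\phi_{stat}^a}$ because the two sign flips at the marked vertices cancel while unmarked vertices carry the $+1$-eigenvector of $C$, and then show that the flip-flop shift swaps equal amplitudes (in particular, the two exceptional $-(3+l)a$ values are exchanged with each other). Your write-up is simply more explicit --- spelling out the orthogonality computation and the register-by-register shift analysis --- than the paper's terser version.
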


\begin{proof}
Consider the effect of a step of the algorithm on $\ket{\phi_{stat}^a}$. The query transformation flips the sign of marked vertices. The coin transformation has no effect on $\ket{s_c^a}$ but flips the signs of $\ket{s_{\L}^{a}}$ and $\ket{s_{\R}^{a}}$. Thus, $(I\otimes C)(Q\otimes I)$ does not change the amplitudes of unmarked vertices and twice flips the signs of amplitudes of marked vertices. 
Therefore, we have $(I\otimes C)(Q\otimes I)\ket{\phi_{stat}^a} = \ket{\phi_{stat}^a}.$
The shift transformation swaps the amplitudes of near-by vertices. For $\ket{\phi_{stat}^a}$, it swaps $a$ with $a$ and $-(3+l)a$ with $-(3+l)a$. Thus, we have $S(I\otimes C)(Q\otimes I)\ket{\phi_{stat}^a} = \ket{\phi_{stat}^a}$.
\end{proof}

The initial state of the algorithm, given by Eq.~(\ref{eq:psi0_grid}),
can be written as 
\begin{equation}
\ket{\psi_0} = \ket{\phi_{stat}^a} + (4+l)a(\ket{i,j,\R} + \ket{i+1,j,\L}),
\end{equation}
for $a=1/\sqrt{(4+l)N}$. The only part of the initial state which is changed by the step of the algorithm is
\begin{equation}
\frac{\sqrt{4+l}}{\sqrt{N}}(\ket{i,j,\R} + \ket{i+1,j,\L}).
\end{equation}

\noindent
Let us establish an upper bound on the probability of finding a marked vertex. 

\begin{lemma}
Consider a grid of size $\sqrt{N} \times \sqrt{N}$ with two adjacent marked vertices $(i,j)$ and $(i+1,j)$. Then for any number of steps, the probability of finding a marked vertex $p_M$ is $O\left(\frac{1}{N}\right)$.
\end{lemma}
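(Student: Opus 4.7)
The plan is to exploit Lemma~\ref{lemma:grid} together with the decomposition of $\ket{\psi_0}$ into a stationary part and a small perturbation, then bound the mass on marked vertices by the triangle inequality.

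First, I would fix $a = 1/\sqrt{(4+l)N}$ so that the unmarked-vertex part of $\ket{\phi_{stat}^a}$ coincides with $\ket{\psi_0}$, and write
\begin{equation*}
\ket{\psi_0} \;=\; \ket{\phi_{stat}^a} \;+\; \ket{\eta(0)}, \qquad \ket{\eta(0)} \;=\; \sqrt{\tfrac{4+l}{N}}\bigl(\ket{i,j,\R} + \ket{i+1,j,\L}\bigr),
\end{equation*}
as already noted in the excerpt. By Lemma~\ref{lemma:grid}, $U'\ket{\phi_{stat}^a} = \ket{\phi_{stat}^a}$, so by linearity
\begin{equation*}
\ket{\psi(t)} \;=\; (U')^t\ket{\psi_0} \;=\; \ket{\phi_{stat}^a} + \ket{\eta(t)}, \qquad \ket{\eta(t)} \;=\; (U')^t\ket{\eta(0)}.
\end{equation*}
Since $U'$ is unitary, $\|\ket{\eta(t)}\| = \|\ket{\eta(0)}\| = \sqrt{2(4+l)/N}$ for every $t$.

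Next I would estimate $\|\Pi_M\ket{\psi(t)}\|$, where $\Pi_M$ projects onto the subspace spanned by all coin states at the two marked vertices $(i,j)$ and $(i+1,j)$; the probability of observing a marked vertex is exactly $p_M(t) = \|\Pi_M\ket{\psi(t)}\|^2$. By the triangle inequality,
\begin{equation*}
\|\Pi_M\ket{\psi(t)}\| \;\le\; \|\Pi_M\ket{\phi_{stat}^a}\| + \|\Pi_M\ket{\eta(t)}\| \;\le\; \|\Pi_M\ket{\phi_{stat}^a}\| + \|\ket{\eta(0)}\|.
\end{equation*}
On each marked vertex the coin part of $\ket{\phi_{stat}^a}$ is one of $\ket{s_{\R}^a}$ or $\ket{s_{\L}^a}$, each of squared norm $a^2\bigl((3+l)^2 + 3 + l\bigr) = a^2(3+l)(4+l)$, so
\begin{equation*}
\|\Pi_M\ket{\phi_{stat}^a}\|^2 \;=\; 2a^2(3+l)(4+l) \;=\; \frac{2(3+l)}{N}.
\end{equation*}

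Combining the two bounds,
\begin{equation*}
p_M(t) \;\le\; \left(\sqrt{\tfrac{2(3+l)}{N}} + \sqrt{\tfrac{2(4+l)}{N}}\,\right)^{\!2} \;=\; O\!\left(\frac{1}{N}\right),
\end{equation*}
uniformly in $t$, which is the claim. There is no real obstacle here: Lemma~\ref{lemma:grid} does all the heavy lifting, and the only thing to verify is the arithmetic of the two norms above. The conceptual point worth highlighting in the write-up is that almost all of $\ket{\psi_0}$ lives in the invariant subspace spanned by $\ket{\phi_{stat}^a}$, and that subspace itself carries only $O(1/N)$ weight on the marked vertices, so no amount of time evolution can push the detection probability above this threshold.
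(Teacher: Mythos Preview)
Your argument is correct and follows essentially the same approach as the paper: decompose $\ket{\psi_0}$ into the stationary state of Lemma~\ref{lemma:grid} plus an $O(1/\sqrt{N})$-norm remainder, then use unitarity to bound the marked-vertex probability. Your use of the triangle inequality on $\Pi_M$ is a slightly cleaner packaging than the paper's explicit amplitude computation, but the substance is the same.
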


\comment{
  Alternatively we can skip the proof and refer to the result/argument from \cite{Nahimovs:2017}
}

\begin{proof}
We have $M=\{(i,j),(i+1,j)\}$.
The only part of the initial state $\ket{\psi(0)}$ changed by the step of the algorithm is $\ket{\phi} = (4+l)a(\ket{i,j,\R} + \ket{i+1,j,\L})$.
The basis states $\ket{i,j,\R}$ and $\ket{i+1,j,\L}$ have the biggest amplitudes of $-(3+l)a$ in the stationary state. 
Therefore, the maximum probability of finding a marked vertex is reached if the state $\ket{\phi}$ becomes
\begin{equation}
\ket{\phi'} = -\alpha\ket{i,j,\R}-\beta\ket{i+1,j,\L},
\end{equation}
for $\alpha,\beta \geq 0$.
Thus, $p_M$ is at most
\begin{equation}
\label{eq:p_M_grid}
p_M\leq 6a^2 + 2(a\sqrt{l})^2 + \left(-(3+l)a-\alpha\right)^2 + \left(-(3+l)a-\beta\right)^2.
\end{equation}

\noindent
Since the evolution is unitary, we have $\alpha^2+\beta^2 = || \ket{\phi} ||^2 = 2\left( (4+l)a \right)^2 $.
Due to symmetry $\alpha$ and $\beta$ should be equal, so the expression (\ref{eq:p_M_grid}) reaches the maximum when $\alpha = \beta = \sqrt{2}(4+l)a$. 

We have $l = \frac{4}{N}$ and $a = \frac{1}{\sqrt{(4+l)N}} = \frac{1}{\sqrt{4(N+l)}}$. Each of summands in the expression (\ref{eq:p_M_grid}) is $O\left(\frac{1}{N}\right)$ and, therefore, we have $p_M = O\left(\frac{1}{N}\right)$.
\end{proof}

\noindent
That is the probability of finding a marked vertex is of the same order as for the classical exhaustive search.

Note that if we have a block of marked vertices we can construct a stationary state as long as we can tile the block by the sub-blocks of size $1 \times 2$ and $2 \times 1$. For example, consider $M = \{(0,0),(1,0),(1,1),(1,2)\}$ for $n \geq 3$. Then the stationary state is given by
$$
\ket{\phi_{stat}^a} = \sum_{i,j=0}^{n-1} \ket{i,j} \ket{s_c^a} - (4 + l)a \left( \ket{0,0,\R} + \ket{1,0,\L} + \ket{1,1,\U} + \ket{1,2,\D} \right).
$$
For more details on constructions of stationary states for blocks of marked vertices on two-dimensional grid see~\cite{Nahimovs:2017}. The paper focuses on the non-lackadaisical quantum walk, nevertheless, the results can be  easily extended to the lackadaisical quantum walk.


\section{Optimality of $l$ for multiple marked vertices}

In \cite{Wong:2018} Wong showed that in case of a single marked vertex, for the weight $l = \frac{4}{N}$, after $O(\sqrt{N\log{N}})$ steps the inner product $\braket{\psi(t)}{\psi(0)}$ becomes close to $0$.
If one measures the state at this moment, he will find the marked vertex with $O(1)$ probability\footnote{The numerical results in \cite{Wong:2018} show that probability of finding a marked vertex is close to $1$ and approaches $1$ then $N$ goes to infinity.}. 
The suggested value of $l$, however, is optimal for a single marked vertex only. 
Saha~{\it et al.}\cite{Saha:2018} studied search for a block of $\sqrt{m} \times \sqrt{m}$ marked vertices and showed that optimal weight in this setting is $l = \frac{4}{N(m + \sqrt{m}/2)}$.

In this section we study search for an arbitrary placement of multiple marked vertices. The presented data is obtained from numerical simulations. The values listed in the tables are calculated in the following way. The number of steps of the algorithm $T$ is the
smallest $t$ for which $\braket{\psi(t)}{\psi(0)}$ reaches 0 (becomes negative). By the probability
we mean the probability of finding a marked vertex when $\ket{\psi(T)}$ is measured.

Tables \ref{tab:2_random_marked_vertices} and \ref{tab:3_random_marked_vertices} give the number of steps and the probability of finding a marked vertex for random placements of $2$ and $3$ marked vertices on $100 \times 100$ grid for $l = \frac{4}{N}$. As one can see the probability of finding a marked vertex is no more close to $1$ as it is for a single marked vertex.

\begin{table}[h]
\centering
\begin{tabular}{lll}
Marked vertices  & $T$ & $Pr$  \\  
(0, 0), (23, 27) & 153 & 0.586377681077719 \\
(0, 0), (35, 68) & 150 & 0.591030741055657 \\
(0, 0), (30, 69) & 151 & 0.588384716869901 \\
(0, 0), (42, 4)  & 152 & 0.590451037529614 \\
(0, 0), (84, 60) & 151 & 0.584982804352049
\end{tabular}
\caption{The number of steps and the probability of finding a marked vertex for different placements of two marked vertices for $100 \times 100$ grid for $l = \frac{4}{N}$.}
\label{tab:2_random_marked_vertices}
\end{table}

\begin{table}[h]
\centering
\begin{tabular}{lll}
Marked vertices  & $T$ & $Pr$  \\  
(0, 0), (34, 52), (93, 53) & 117 & 0.440756928151790 \\
(0, 0), (26, 12), (22, 32) & 126 & 0.434581723157292 \\
(0, 0), (40, 94), (13, 62) & 119 & 0.430688837061525 \\
(0, 0), (7, 44),  (7, 98)  & 131 & 0.430029225026132 \\
(0, 0), (80, 78), (28, 31) & 118 & 0.454915029501263 
\end{tabular}
\caption{The number of steps and the probability of finding a marked vertex for different placements of three marked vertices for $100 \times 100$ grid for $l = \frac{4}{N}$.}
\label{tab:3_random_marked_vertices}
\end{table}

\noindent
Table \ref{tab:1_10_marked_vertices_Wong_Saha} shows the number of steps and the probability for $200 \times 200$ grid with the set of marked vertices 
\begin{equation}
M_m = \{ (0,10i) \  | \ i \in [0, m-1] \}
\end{equation}
for weights of a self-loop suggested by Wong (the 2nd and the 3rd columns) and by Saha~{\it et al.} (the last two columns). 
As one can see for both weights the probability goes down with the number of marked vertices. 

\begin{table}[h]
\centering
\begin{tabular}{lllll}
 & \multicolumn{2}{l}{$l = \frac{4}{N}$} & \multicolumn{2}{l}{$l = \frac{4}{N(m+\sqrt{m}/2)}$} \\  
$m$ & $T$ & $Pr$ & $T$ & $Pr$  \\  
 1 & 602 & 0.987103466750771 & 602 & 0.9871034667507710 \\
 2 & 374 & 0.556471227830710 & 355 & 0.3290596740364150 \\
 3 & 320 & 0.393873564782729 & 307 & 0.1901285270921410 \\
 4 & 288 & 0.318205769345174 & 278 & 0.1362737798676850 \\
 5 & 266 & 0.269653054659757 & 258 & 0.1120867687513450 \\
 6 & 250 & 0.234725633256426 & 243 & 0.0963898188447711 \\
 7 & 235 & 0.205158185237765 & 229 & 0.0847091122232096 \\
 8 & 223 & 0.184324335272977 & 218 & 0.0764074018340319 \\
 9 & 213 & 0.168420810292804 & 208 & 0.0694735116911546 \\
10 & 203 & 0.153267792359668 & 198 & 0.0634301283171891 
\end{tabular}
\caption{The number of steps and the probability of finding a marked vertex for $200 \times 200$ grid with the set of marked vertices $M_m$ for different $l$.}
\label{tab:1_10_marked_vertices_Wong_Saha}
\end{table}

We tried to adjust the value of $l$ to increase the probability of finding a marked vertex. 
We searched for a better value of $l$ in the form $l = \frac{4}{N}a$. The figures \ref{fig:m_2_different_a} and \ref{fig:m_3_different_a} show the probability of finding a marked vertex for $100 \times 100$ grid with the sets of marked vertices $M_2$ and $M_3$, respectively, for different values of $a$.

\begin{figure}[!htb]
\centering
\includegraphics[scale=0.5]{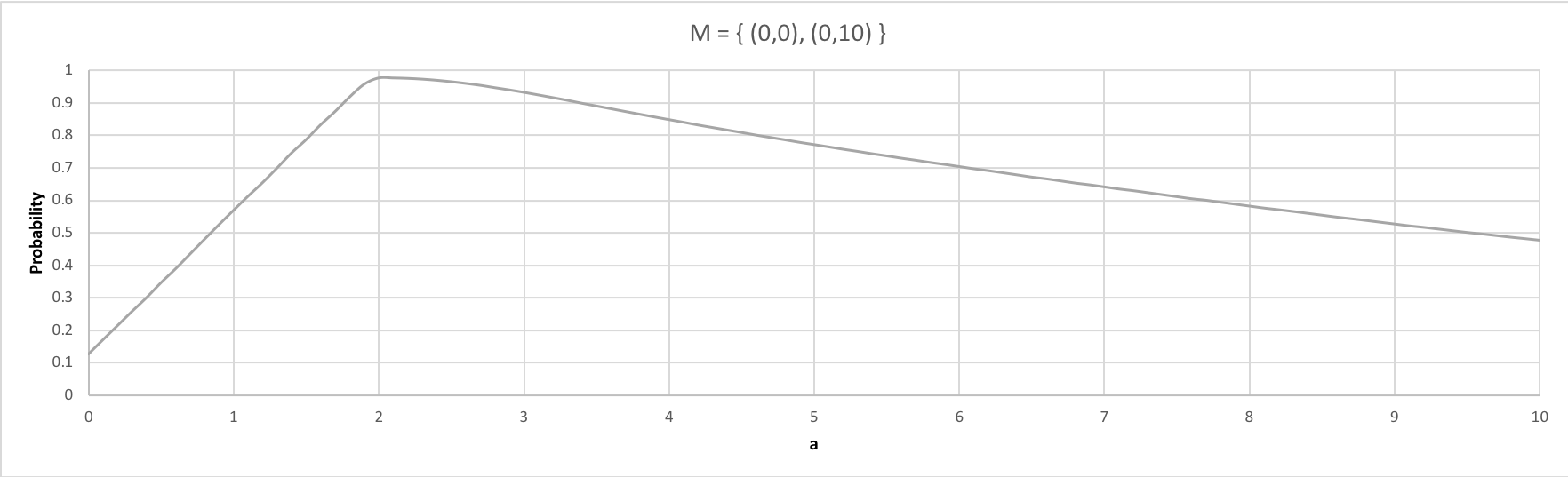}
\caption{Probability of finding a marked vertex for $100 \times 100$ grid with the set of marked vertices $M_2$ for different values of $a$.}
\label{fig:m_2_different_a}
\end{figure}

\begin{figure}[!htb]
\centering
\includegraphics[scale=0.5]{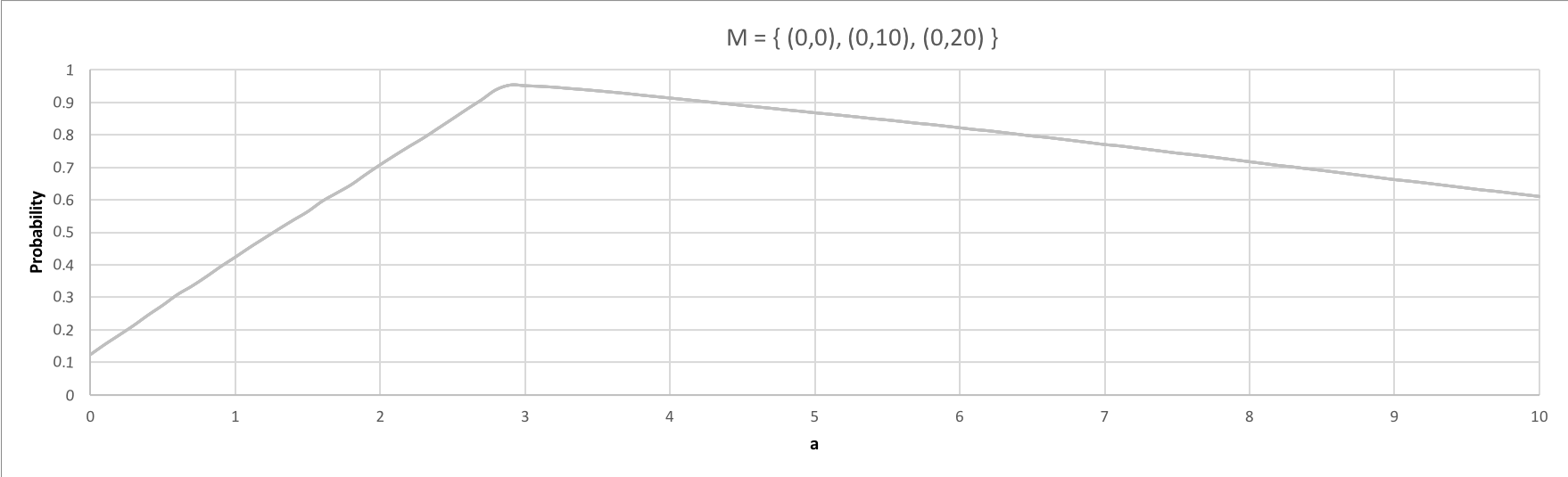}
\caption{Probability of finding a marked vertex for $100 \times 100$ grid with the set of marked vertices $M_3$ for different values of $a$. }
\label{fig:m_3_different_a}
\end{figure}

\noindent
As one can see the optimal value of $a$ for $M_2$ is close to $2$ and for $M_3$ is close to $3$.
The similar results were obtained for bigger grids with larger sets of marked vertices. 
Table \ref{tab:2_10_marked_vertices_optimal_a} gives the optimal value of $a$ and the corresponding number of steps and the probability for $200 \times 200$ grid with the set of marked vertices $M_m$.

\begin{table}[h]
\centering
\begin{tabular}{llll}
$m$ & $a_{opt}$ & $T$ & $Pr$  \\  
 2 & 1.94 & 470 & 0.970784853767743 \\
 3 & 2.90 & 419 & 0.968156591210997 \\
 4 & 3.82 & 394 & 0.957428109231279 \\
 5 & 4.66 & 374 & 0.93524432034913; \\
 6 & 5.44 & 358 & 0.910278544128265 \\
 7 & 6.17 & 329 & 0.884824083920976 \\
 8 & 7.06 & 301 & 0.884650346189075 \\
 9 & 8.00 & 295 & 0.891195819702051 \\
10 & 8.86 & 292 & 0.889060897077511
\end{tabular}
\caption{The number of steps and the probability of finding a marked vertex for $200 \times 200$ grid with the set of marked vertices $M_m$ for optimal $a$.}
\label{tab:2_10_marked_vertices_optimal_a}
\end{table}

This raises a conjecture that the optimal weight of a self loop is $l = \frac{4(m - O(m))}{N}$. The table \ref{tab:1_10_marked_vertices_NN} shows the number of steps and the probability for $200 \times 200$ grid with the set of marked vertices $M_m$ for the weight $l=\frac{4m}{N}$ (the 2nd and the 3rd columns) and $l=\frac{4m-\sqrt{m}}{N}$ (the last two columns).
As one can see $l = \frac{4m}{N}$ results in a high probability of finding a marked vertices for a small number of marked vertices, however, the probability goes down with the number of marked vertices. 
On the other hand, $l = \frac{4(m-\sqrt{m})}{N}$ gives a modest probability for a small number of marked vertices, but the probability grows with the number of marked vertices (and, moreover, seems to tend to a constant). Therefore, we would suggest to use the last of the proposed value of $l$, especially for bigger grids and large number of marked vertices.

It worth noting, that the found values of $l$ result in high probability not only for $M_i$ sets of marked vertices, but work equivalently well for other placements of $m$ marked vertices, including a random placement.

\begin{table}[h]
\centering
\begin{tabular}{lllll}
 & \multicolumn{2}{l}{$l = \frac{4m}{N}$} & \multicolumn{2}{l}{$l = \frac{4(m-\sqrt{m})}{N}$} \\  
$m$ & $T$ & $Pr$ & $T$ & $Pr$  \\  
 1 & 602 & 0.987103466750771 & 421 & 0.138489015636136 \\
 2 & 480 & 0.973610115577208 & 358 & 0.368553562270952 \\
 3 & 426 & 0.970897595293325 & 326 & 0.474753065755793 \\
 4 & 400 & 0.957956584718826 & 305 & 0.541044821578945 \\
 5 & 376 & 0.933005243569973 & 288 & 0.593276362658860 \\
 6 & 352 & 0.904811189309431 & 277 & 0.633876384394702 \\
 7 & 312 & 0.885901799105365 & 268 & 0.661120674334215 \\
 8 & 300 & 0.891698403206386 & 260 & 0.678417412900138 \\
 9 & 296 & 0.892165251874117 & 254 & 0.694145864271432 \\
10 & 293 & 0.884599315314024 & 250 & 0.709033853082403 
\end{tabular}
\caption{The number of steps and the probability of finding a marked vertex for $200 \times 200$ grid with the set of marked vertices $M_m$ for different $l$.}
\label{tab:1_10_marked_vertices_NN}
\end{table}

\comment {
The complete data is available in this paper’s arXiv source.
The code used to simulate the quantum walk is available  on Github.
}


\section{Conclusions}\label{sec:conclusions}

In this paper, we have demonstrated the existence of exceptional configurations of marked vertices for search by lackadaisical quantum walk on two-dimensional grid. 
We also numerically showed that weights of the self-loop $l$, suggested by previous papers~\cite{Wong:2018,Saha:2018}, are not optimal for multiple marked vertices (both weight seems to work in specific cases only). We proposed two values of $l$ resulting in a much higher probability of finding a marked vertex than previously suggested weights. Moreover, for the found values, the probability of finding a marked vertex does not decrease with number of marked vertices.


\comment {
\subparagraph*{Acknowledgements.}
}


\bibliography{Paper}


\end{document}